\newcommand*{\R}{{\mathbb{R}}}
\newtheorem{theorem}{Theorem}
\newtheorem{lemma}[theorem]{Lemma}
\newtheorem{remark}{Remark}
\newenvironment{proof}[1][Proof]{\textit{#1.} }{\ \rule{0.5em}{0.5em}}
\journal{Journal of Computational and Applied Mathematics}
\begin{document}

\begin{frontmatter}



\title{High-order compact finite difference scheme for option pricing in stochastic volatility models}


\author[label1]{Bertram D{\"u}ring\corref{cor1}} 
\cortext[cor1]{Corresponding author}
\address[label1]{Department of Mathematics, University of Sussex, Pevensey II, Brighton, BN1 9QH, United Kingdom.}
\ead{b.during@sussex.ac.uk}

\author[label2]{Michel Fourni\'e}
\address[label2]{Institut de Math\'ematiques de Toulouse \\
Universit\'e de Toulouse et CNRS (UMR 5219), France.}
\ead{michel.fournie@math.univ-toulouse.fr}

\begin{abstract}
We derive a new high-order compact finite difference scheme for option
pricing in stochastic volatility models. 
The scheme is fourth order accurate in space and second order accurate
in time. 
Under some restrictions, theoretical results like unconditional stability 
in the sense of von Neumann are presented. Where the analysis becomes too involved we
validate our findings by a numerical study. Numerical experiments for
the European option pricing problem are presented. We observe fourth order convergence for non-smooth payoff.
\end{abstract}

\begin{keyword}
Option pricing \sep compact finite difference discretizations \sep
mixed derivatives \sep high-order scheme
\MSC 65M06 \sep 65M12 \sep 91B28
\end{keyword}

\end{frontmatter}



\section{Introduction}

The traditional approach to price derivative assets or options 
is to specify an asset price process exogenously by a stochastic
diffusion process and then price by no-arbitrage arguments.
The seminal example of this approach is Black \& Scholes'
paper \cite{BlaSch73} on pricing of European-style options.
This approach leads to simple, explicit pricing formulas. However, 
empirical research has revealed that they are not able to explain
important effects in real financial markets,
e.g.\ the volatility smile (or skew) in option prices.

In real financial markets, not only asset returns are subject to risk,
but also the estimate of the riskiness is typically subject to
significant uncertainty. To incorporate such additional source of
randomness into an asset pricing model, one has to introduce a second
risk factor. This also allows to fit higher moments of the asset
return distribution. The most prominent work in this direction is
Heston model \cite{Hes93}.
Such models are based on a two-dimensional stochastic diffusion process with
two Brownian motions with correlation $\rho$, i.e.,
$dW^{(1)}(t)dW^{(2)}(t)=\rho\, dt,$ on a given filtered probability space for
the stock price $S=S(t)$ and the stochastic volatility $\sigma=\sigma(t)$
\begin{align*}
dS(t)& =\bar{\mu} S(t)\,dt +\sqrt{\sigma(t)} S(t)\,dW^{(1)}(t),\\
d\sigma(t)& =a(\sigma(t)) \,dt+b(\sigma(t))\,dW^{(2)}(t),
\end{align*}
where $\bar{\mu}$ is the drift of the stock, $a(\sigma)$ and $b(\sigma)$ are the drift and the diffusion coefficient of
the stochastic volatility.

Application of It\^o's Lemma leads to 
partial differential equations of the following form
\begin{equation}
\label{P0}
 V_t+\frac12 S^2\sigma V_{SS}+\rho b(\sigma) \sqrt{\sigma}S V_{S\sigma}+\frac12  b^2(\sigma)
V_{\sigma\sigma}+a(\sigma)V_\sigma+rSV_S-rV=0, 
\end{equation}
where $r$ is the (constant) riskless interest
rate. Equation \eqref{P0} has to be solved for
$S,\sigma>0,\,0 \leq t \leq T$ and subject to final and 
boundary conditions which depend on the specific option that is to be
priced. 

For some models and under additional restrictions, closed form
solutions to \eqref{P0} can be obtained by Fourier methods (e.g.\
  \cite{Hes93}, \cite{Due09}). 
Another approach is to derive approximate analytic expressions, see e.g.\
\cite{BeGoMi10} and the literature cited therein.
In general, however, ---even in the Heston model \cite{Hes93}  when the
parameters in it are non constant--- equation \eqref{P0} has
to be solved numerically. Moreover, many (so-called American) options
feature an additional early exercise right. Then one has to solve a
free boundary problem which consists of \eqref{P0} and an early
exercise constraint for the option price. Also for this problem one
typically has to resort to numerical approximations.

In the mathematical literature, there are many papers on numerical
methods for option pricing, mostly addressing the one-dimensional case of a single
risk factor and using standard, second order finite difference
methods (see, e.g., \cite{TavRan00} and the references therein). More
recently, high-order finite difference schemes (fourth order in space)
were proposed that use a compact stencil (three points in space). In
the present context see, e.g., \cite{TaGoBh08} for linear and
\cite{DuFoJu04,DuFoJu03,LiaKha09} for fully nonlinear problems. 

There are less works considering numerical methods for option
pricing in stochastic volatility models, i.e., for two spatial
dimensions. Finite difference approaches that are used are
often standard, low order methods (second order in space) and do
provide little numerical analysis or convergence results. Other
approaches include finite element-finite
volume \cite{ZvFoVe98}, multigrid \cite{ClaPar99}, sparse wavelet
\cite{HiMaSc05}, or spectral methods \cite{ZhuKop10}.

Let us review some
of the related finite difference literature. Different efficient
methods for solving the American
option pricing problem for the Heston model are compared in
\cite{IkoToi07}. The article focusses on the treatment of the early
exercise free boundary 
and uses a second order finite difference discretization.
In \cite{HouFou07} different, low order ADI (alternating
direction implicit) schemes are adapted to the Heston model to include
the mixed spatial derivative term. 
While most of \cite{TaGoBh08} focusses on high-order compact scheme
for the standard (one-dimensional) case, in a short remark \cite[Section~5]{TaGoBh08} also the
stochastic volatility (two-dimensional) case is considered. However,
the final scheme there is of second order only due to the low order
approximation of the cross diffusion term.

The originality of the present work consists in proposing a new, {\em
  high-order compact
finite difference scheme\/} for (two-dimensional) option pricing
models with {\em stochastic volatility}. It should be emphasised that
although our presentation is focused on the Heston model, our methodology
naturally adapts to other stochastic volatility models. 
We derive a new compact scheme that is fourth order accurate in space
and second order accurate in time. 
The stability analysis of the scheme is a difficult task due to the 
multi-dimensional context, variable coefficients and the nature of the
boundary conditions.
Under additional assumptions (zero correlation, periodic boundary
conditions), we establish theoretical results like
unconditional stability in the sense of von Neumann (for `frozen
coefficients'). We discuss this in the numerical part.

This paper is organised as follows. In the next section,
we recall the Heston model from \cite{Hes93} and its closed
form solution for the constant parameters case.
In Section~\ref{probsection} we introduce new independent variables
to transform the partial differential equation to a more tractable
form.
In Section~\ref{HOCsection} we derive the new high-order compact
scheme. We analyse its necessary stability condition 
in section \ref{numanalsection}. 
Numerical experiments that confirm
the good properties of the
method are presented in Section~\ref{numsection}. We give
numerical results for the European option pricing
problem with non-smooth payoff and observe fourth order
convergence. Section~\ref{concsection} concludes.

\section{Heston model}

Let us recall the Heston model from \cite{Hes93}
on which we will focus our presentation.
Consider a two-dimensional standard Brownian motion
$W=(W^{(1)},W^{(2)})$ with correlation $dW^{(1)}(t)dW^{(2)}(t)=\rho dt$
on a given filtered probability space.
Assuming a specific form of the drift $a(\sigma)$ and the diffusion
coefficient $b(\sigma)$ of the stochastic volatility, 
the value of the underlying asset in \cite{Hes93} is characterised by 
 \begin{align}
 dS(t)& =\bar{\mu} S(t) \,dt+\sqrt{\sigma(t)} S(t)\,dW^{(1)}(t),\quad \nonumber \\
 \label{SDEs}
 d\sigma(t)& =\kappa^*(\theta^*-\sigma(t)) \,dt+v\sqrt{\sigma(t)}\,dW^{(2)}(t),
 \end{align}
 for $0< t\leq  T$ with $S(0),\sigma(0)>0$ and $\bar{\mu}$, $\kappa^*$, $v$ and $\theta^*$ the
 drift, the mean reversion speed, the volatility of volatility and the
 long-run mean of $\sigma,$ respectively.

 Note that our method
carries over to other stochastic volatility models with
different choices of the drift and the diffusion coefficient of
the stochastic volatility, e.g., the GARCH diffusion model
\begin{equation}
\label{eq:garchmodel}
d\sigma(t)=\kappa^*(\theta^*-\sigma(t))\,dt+v\sigma(t)\,dW^{(2)}(t),
\end{equation}
or the so-called 3/2-model
\begin{equation}
\label{eq:32model}
d\sigma(t)= \kappa^*\sigma(t) (\theta^*-\sigma(t))\,dt+v{\sigma(t)}^{3/2}\,dW^{(2)}(t),
\end{equation}
in a natural way (see also Remark \ref{otherstochmodels} at the end of
section~\ref{HOCsection:derivation}).

 In the Heston model, it follows by It\^o's lemma and standard arbitrage arguments that any
 derivative asset $V=V(S,\sigma,t)$ solves the following partial differential equation
 \begin{multline}
  V_t+\frac12 S^2\sigma V_{SS}+\rho v {\sigma}S V_{S\sigma}+\frac12 v^2\sigma
  V_{\sigma\sigma}+rSV_S\\
 +\big[\kappa^* (\theta^*-\sigma)-\lambda(S,\sigma,t)\big]V_\sigma-rV=0, \label{P1}
 \end{multline}
 which has to be solved for $S,\sigma>0$, $0 \leq t < T$ and subject to a suitable
 final condition, e.g.,
 $$V(S,\sigma,T)=\max(K-S,0),$$
 in case of a European put option (with $K$ denoting the strike price).
 In \eqref{P1}, $\lambda(S,\sigma,t)$ denotes the market price of volatility risk.
 While in principle it could be estimated from market data, this is
 difficult in practice and the results
 are controversial. Therefore, one typically assumes a risk premium
 that is proportional to $\sigma$ and chooses $\lambda(S,\sigma,t)=\lambda_0\sigma$ for some
 constant $\lambda_0$. For streamlining the presentation we restrict ourselves
 to this important case, although our scheme applies to
 general functional forms $\lambda=\lambda(S,\sigma,t).$

 The `boundary' conditions in the case of the put option read as
 follows
 \begin{subequations}
 \begin{align}
   V(0,\sigma,t)&=Ke^{-r(T-t)},&  &T> t\geq 0,\;\sigma>0,
 \label{boundary1}\\
   V(S,\sigma,t)&\to 0,& &T> t\geq 0,\;\sigma>0,\; \text{as } S\to\infty,
   \label{boundary2}\\
   V_\sigma(S,\sigma,t)&\to 0,& &T> t\geq 0,\;S>0,\; \text{as } \sigma \to\infty.
   \label{boundary4}
 \end{align}
 The remaining boundary condition at $\sigma=0$ can be obtained by looking at
 the formal limit $\sigma\to 0$ in \eqref{P1}, i.e.,
 \begin{equation}
   V_t+rSV_S+\kappa^*\theta^* V_\sigma-rV= 0,\quad T> t\geq 0,\;S>0,\; \text{as } \sigma\to  0.
   \label{boundary3}
 \end{equation}
 This boundary condition is used frequently, e.g.\ in \cite{IkoToi07,ZvFoVe98}.
 Alternatively, one can use a homogeneous Neumann condition
 \cite{ClaPar99}, i.e.,
 \begin{equation}
   V_\sigma(S,\sigma,t) \to 0, \quad T> t\geq0,\;S>0,\; \text{as } \sigma\to  0.
 \end{equation}
\end{subequations}

 For {\em constant\/} parameters, one can employ Fourier transform techniques and
 obtain a system of ordinary differential equations which can be
 solved analytically \cite{Hes93}.
 By inverting the transform one arrives at a closed-form solution of
 \eqref{P1}, where the European put option price $V$ is given by
 \begin{equation}
 \label{HestonFormula}
 V(S,\sigma,t)=Ke^{-r(T-t)}\mathcal{I}_2 -S\mathcal{I}_1, 
 \end{equation}
 with ($k=1,2$)
 \begin{align}
 \mathcal{I}_k&=\frac12+\frac1\pi\int_0^\infty\mathrm{Re}\biggl[\frac{e^{-i\xi
     \ln( K)}f_k(\xi)}{i\xi}\biggl]\,d\xi,\label{I1I2}\\
  f_k(\xi)&=\exp\big(C(T-t,\xi)+\sigma D(T-t,\xi)+i\xi \ln S\big),\nonumber\\
 C(\tau,\xi)&=r\xi i \tau+\frac{\kappa^*\theta^*}{v^2}\Bigl[(b+d)\tau -2\ln\Bigl(\frac{1-ge^{d\tau}}{1-g}\Bigr)\Bigr],\;
 D(\tau,\xi)= \frac{b_k+d_k}{v^2}  \frac{1-{e^{d_k\tau}}}{1-g{e^{d_k\tau}}},\nonumber \\
 g&=\frac{b_k+d_k}{b_k-d_k},\quad
 d_k=\sqrt { \left( {\xi}^{2}\mp i\xi \right) { v}^{2}+ b_k^{2},
 }\quad
 b_k=\kappa^*+\lambda_0-\rho v(i\xi+\delta_{1k}).\nonumber
 \end{align}
Here, $\delta_{i,j}$ denotes Kronecker's delta.

 \section{Transformation of the equation and boundary conditions}
 \label{probsection}

 Under the transformation of variables 
 \begin{equation}
   \label{trafo}
   x=\ln \Big(\frac SK\Big),\quad \tilde t=T-t,
   \quad u=\exp(r\tilde t)\frac VK,
 \end{equation}
 (we
 immediately drop the tilde in the following) we arrive at
 \begin{multline}
 \label{P2}
 u_t-\frac12 \sigma \bigl(u_{xx}+2\rho vu_{x\sigma}+v^2u_{\sigma\sigma}\bigr)\\
 +\Big(\frac12 \sigma-r \Big)u_x-\big[\kappa^*\theta^*-(\kappa^* +\lambda_0)\sigma\big]u_\sigma=0,
 \end{multline}
 which is now posed on $\R\times\R^+\times(0,T).$
 We study the problem using the modified parameters
 $$
 \kappa=\kappa^*+\lambda_0,\quad \theta=\frac{\kappa^*\theta^*}{\kappa^*+\lambda_0},
 $$
 which is both convenient and standard practice. For similar reasons,
 some authors set the market price of volatility risk to zero.
 Equation \eqref{P2} can then be written as
 \begin{equation}
 \label{P3}
 u_t-\frac12 \sigma \bigl(u_{xx}+2\rho vu_{x\sigma}+v^2u_{\sigma\sigma}\bigr)+\Big(\frac12 \sigma-r\Big)u_x-\kappa \big[\theta -\sigma\big]u_\sigma=0.
 \end{equation}
 The problem is completed by the following initial and boundary conditions:
 \begin{align}
 u(x,\sigma,0) &=\max (1-\exp (x),0),& & x\in\R ,\;\sigma>0,\nonumber \\
 u(x,\sigma,t) &\to 1,& & x\to -\infty ,\;\sigma>0,\;t>0,\nonumber \\
 u(x,\sigma,t) &\to 0,& & x\to +\infty ,\;\sigma>0,\;t>0,\nonumber \\
 u_\sigma(x,\sigma,t) &\to 0,& & x\in\R ,\;\sigma\to \infty,\;t>0,\nonumber\\
u_\sigma(x,\sigma,t) &\to 0,& & x\in\R ,\;\sigma\to 0,\;t>0.\nonumber
 \end{align}


 \section{High-order compact scheme}
 \label{HOCsection}

 For the discretization, we replace $\R$ by $[ -R_1,R_1] $
 and $\R^+$ by $[L_2,R_2]$ with $R_1,R_2>L_2>0$ .
 For simplicity, we consider a uniform grid $Z=\{ x_{i}\in \left[ -R_1,R_1%
 \right]:$ $x_{i}=ih_1$, $i=-N,\dots,N\}\times\{ \sigma_{j}\in \left[L_2,R_2%
 \right]:$ $\sigma_{j}=L_2+jh_2$, $j=0,\dots,M\}$ consisting of $(2N+1)\times
 (M+1)$ 
 grid points,
 with $R_1=Nh_1,$ $R_2=L_2+Mh_2$ and with space steps $h_{1}$, $h_{2}$ and time step $k$. Let $u_{i,j}^{n}$
 denote the approximate solution of \eqref{P3} in $(x_{i},\sigma_j)$ at the time $%
 t_{n}=nk$ and let $u^{n}=(u_{i,j}^{n})$. 
 
We impose artificial boundary conditions in a classical manner rigorously studied for a class of  
 Black-Scholes equations in \cite{KanNic00}.
 The boundary conditions on the grid are treated as follows.
Due to the compactness of the scheme, the treatment of the Dirichlet
boundary conditions is minimal.
It is straightforward to consider Dirichlet boundary conditions
without introduction of numerical error by imposing 
 \begin{equation*}
 u_{-N,j}^{n}=1-e^{rt_{n}-Nh}, \quad u_{+N,j}^{n}= 0,\quad (j=0,\dots,M).
 \end{equation*}
At the other boundaries we impose homogeneous Neumann boundary
 conditions.
The treatment of homogeneous Neumann conditions requires more
attention. Indeed, no values are prescribed.
The values of the unknown on the boundaries must be set by extrapolation from values in the interior. Then 
a numerical error is introduced, and the main consideration is that
the order of extrapolation should be high enough not to affect the overall order of accuracy. We refer to the paper of Gustafsson \cite{GusBC} to discuss the influence of the 
order of the approximation on the global convergence rate and justify
our choice of fourth order extrapolation formulae. 
By Taylor expansion, if we cancel the first derivates on the boundaries, it is trivial to verify
\begin{equation*}
u_{i,0}^{n}=\frac{18}{11}u_{i,1}^{n}-\frac{9}{11}u_{i,2}^{n}+\frac{2}{11}u_{i,3}^{n},\quad
(i=-N+1,\dots,N-1),
\end{equation*}
and
\begin{equation*}
u_{i,M}^n = \frac{18}{11} u_{i,M-1}^n - \frac{9}{11} u_{i,M-2}^n +
\frac{2}{11} u_{i,M-3}^n, \quad (i=-N+1,\dots,N-1).
\end{equation*}

 \subsection{Derivation of the high-order scheme for the elliptic problem}
\label{HOCsection:derivation}

  First we introduce the high-order compact finite difference
  discretization for the stationary, elliptic problem with Laplacian operator 
  which appears after the variable transformation $y=\sigma/v$.
  Equation \eqref{P3} is then reduced to the two-dimensional elliptic equation
  \begin{equation}
  \label{eq:convection}
  -\frac{1}{2} v y  (u_{xx}+u_{yy}) - \rho v y u_{xy}+\Big(\frac12 v
  y-r\Big)u_x-\kappa \frac{\theta -vy}{v}u_y=f(x,y),
  \end{equation}
  with the same boundary conditions.\\
  %
  %

  The fourth order compact finite difference scheme
  uses a nine-point computational stencil using the eight nearest neighbouring 
  points of the reference grid point $(i,j).$
 
  The idea behind the derivation of the high-order compact scheme is to operate
  on the differential equations as an auxiliary relation to obtain finite difference
  approximations for high-order derivatives in the truncation error. Inclusion
  of these expressions in a central difference method for  
  equation~(\ref{eq:convection}) increases the order of accuracy, typically to
  $\mathcal{O}(h^4),$ while retaining a compact stencil defined by nodes surrounding a 
  grid point.

  Introducing a uniform grid with mesh spacing $h=h_1=h_2$ in both the $x$- and $y$-direction,
  the standard central difference approximation to equation~(\ref{eq:convection})
  at grid point $(i,j)$ is
  \begin{multline}
  \label{eq:central}
  -\frac{1}{2} v y_j  \bigl(\delta_x^2u_{i,j}+\delta_y^2u_{i,j}\bigr) - \rho v y_j
  \delta_x\delta_y u_{i,j}\\
  +\Big(\frac12 vy_j-r\Big)\delta_x u_{i,j}-\kappa \frac{\theta -vy_j}{v}\delta_y u_{i,j}
    - \tau_{i,j}= f_{i,j},
  \end{multline} 
  where $\delta_x$ and $\delta_x^2$ ($\delta_y$ and $\delta_y^2$, respectively) denote 
  the first and second order central difference approximations with
  respect to $x$ (with respect to $y$). The associated
  truncation error is given by
  \begin{multline}
  \label{eq:tau}
  \tau_{i,j} =
  \frac{1}{24}vyh^{2}
  (u_{xxxx} + u_{yyyy})  +\frac{1}{6}\rho vy h^{2}(u_{xyyy} + u_{xxxy})\\ 
  +\frac{1}{12}( 2\,r-vy ) h^{2} u_{xxx} +\frac{1}{6}{\frac {\kappa  ( \theta -vy ) }{v}}h^{2}u_{yyy} +\mathcal{O}(h^4).
  \end{multline}
  For the sake of readability, here and in the following we omit the
  subindices $j$ and $(i,j)$ on $y_j$ and $u_{i,j}$ (and its
  derivatives), respectively.
  We now seek second-order approximations to the derivatives appearing in (\ref{eq:tau}). 
  Differentiating equation~(\ref{eq:convection}) once with respect to
  $x$ and $y,$ respectively, yields
  \begin{align}
  \label{eq:dx1}
  u_{xxx}=& -u_{xyy} -2\rho u_{xxy} -{\frac {  2r+vy  }{vy}}u_{xx}+2\,{\frac
    {\kappa  ( vy-\theta  )}{{v}^{2}y}}u_{xy}-{\frac {2}{vy}}f_x,\\
  \nonumber
  u_{yyy} =& -u_{xxy}-2\rho u_{xyy} -\frac{1}{y} u_{xx}-\frac{2 \kappa  ( \theta -vy) +v^2}{v^2y} u_{yy} \\
  \label{eq:dx2}
  & \hspace*{2.5cm}  -\frac{2 \rho  +2r- vy}{vy} u_{xy}+\frac{1}{y} u_x+\frac{2 \kappa }{vy}u_y-\frac{2}{vy}f_y.
  \end{align}
  Differentiating equations (\ref{eq:dx1}) and (\ref{eq:dx2}) with respect to $y$ and $x,$ respectively, and adding the two expressions, we obtain
  \begin{multline}
  \label{eq:dxy}
  u_{{{\it xyyy}}}+u_{{{\it xxxy}}}=\frac{vy+2r}{2v{y}^{2}}u_{xx}+\frac{\kappa  (\theta +vy)}{v^2y^2} u_{xy}-
  \frac{4\kappa (\theta -vy)+v^2}{2v^2y}u_{xyy}\\
  -\frac{\rho v+2r-vy}{vy}u_{xxy}-2\rho u_{xxyy}-\frac {1}{2y}u_{xxx}+\frac{1}{vy^2}f_{x} - \frac{2}{vy}f_{xy}.
  \end{multline}
  Notice that all the terms in the right hand sides of (\ref{eq:dx1})-(\ref{eq:dxy})
  have compact $\mathcal{O}(h^2)$ approximations at node $(i,j)$ using finite differences based on $\delta_x$, $\delta_x^2$,  
  $\delta_y$, $\delta_y^2$. We have, for example, ${u_{xxy}}_{i,j}=\delta_x^2\delta_y u_{i,j}+\mathcal{O}(h^2).$
  By differentiating equation~(\ref{eq:convection}) twice
  with respect to $x$ and $y$, respectively,
  and adding the two expressions, we obtain
  \begin{multline}
  \label{eq:dxxyy}
  u_{{{\it xxxx}}}+u_{{{\it yyyy}}}=-2 \rho u_{{{\it xyyy}}}-2 \rho
  u_{{{\it xxxy}}}-2 u_{{{\it xxyy}}}+2{\frac { 
 (   \kappa  vy- {v}^{2}- \kappa  \theta  ) }{{
  v}^{2}y}}u_{{{\it xxy}}} \\
  -{\frac { ( 2 r-vy  ) }{vy}}u_{{{\it xxx}}}+2{\frac { ( \kappa  vy- {v}^{2}-\kappa  \theta    ) }{{v}^{2}y}}u_{{{\it yyy}}}
  -{\frac { ( -vy+4  \rho v+2 r ) }{vy}}u_{{{\it xyy}}} \\
  +4 {\frac {\kappa   }{vy}}u_{{{\it yy}}} + \frac {2}{y}u_{{{\it xy}}}
 - {\frac {2}{vy}}(f_{{{\it xx}}}+f_ {{{\it yy}}}).
  \end{multline}
  Again, using (\ref{eq:dx1})-(\ref{eq:dxy}), the right hand side can be approximated up to $\mathcal{O}(h^2)$ within
  the nine-point compact stencil. 
  Substituting equations~(\ref{eq:dx1})-(\ref{eq:dxxyy}) 
  into equation~(\ref{eq:tau}) and simplifying yields a new
  expression for the 
  error term $\tau_{i,j}$ that consists only of terms which are either
  \begin{itemize}
  \item terms of order $\mathcal{O}(h^4)$, or
  \item terms of order $\mathcal{O}(h^2)$ multiplied by derivatives of $u$ which 
  can be approximated up to $\mathcal{O}(h^2)$ within the nine-point
  compact stencil.
  \end{itemize}
  Hence, substituting the central $\mathcal{O}(h^2)$ approximations to the
  derivatives in this new expression for the error term
  and inserting it into (\ref{eq:central}) yields the
  following $\mathcal{O}(h^4)$ approximation to the initial partial
  differential equation (\ref{eq:convection}),
   \begin{align}
   &-\frac{1}{24}\frac {h^2((vy_j-2r)^2-4\rho vr-2\kappa ( vy_j-\theta ) -2v^2)+12v^2y_j^2  }{vy_j}{\delta^2_x u_{i,j} }\nonumber \\
   &-\frac{1}{12}\frac {h^2(2\kappa ^{2}(vy_j-\theta )^2-\kappa  v^3y_j-\kappa  \theta v^2
       -v^4)+6v^4y_j^2 }{v^3y_j}{\delta^2_y u_{i,j}}\nonumber\\
   &-\frac{1}{12}h^{2}vy_j(1+2\rho^2) {\delta^2_x\delta^2_y u_{i,j}}\nonumber\\
   &+\frac{h^2}{6}\frac { (\kappa (vy_j-\theta )+v\rho(vy_j-2r) )}{v} {\delta^2_x\delta_y u_{i,j}}\nonumber\\
   &+\frac{{h}^{2}}{12}\frac { (4\kappa \rho(vy_j-\theta )+v(vy_j-2r))}{v} {\delta_x
   \delta^2_y u}_{i,j}\nonumber\\
   &-\frac16 \frac{h^2(\kappa (vy_j-2r)(vy_j-\theta )-\kappa v^2y_j\rho-v^3\rho-v^2r) +6v^3y_j^2\rho}{{v}^{2}y_j}\delta_x \delta_y u_{i,j}\nonumber\\
   &+\frac{1}{12}\frac{6v^2y_j^2-12vy_jr-h^2[v^2+\kappa (vy_j-\theta )]}{vy_j}{\delta_x u_{i,j}}\nonumber\\
   &+\frac{\kappa }{6}{\frac {(6v^2y_j^2-6vy_j\theta - h^2[v^2+\kappa (vy_j-\theta )]) }{{v}^{2}y_j}}\delta_y u_{i,j}\nonumber\\
   =& f_{i,j}
   +\frac{{h}^{2}}{6}{\frac {\rho}{v}} \delta_x\delta_y f_{i,j}
   -\frac{{h}^{2}}{6}{\frac {( {v}^{2}+\kappa ( vy_j-\theta ))}{{v}^{2}y_j}}  \delta_y f_{i,j}\nonumber \\
   &-\frac{{h}^{2}}{12}{\frac { (2\rho v-2r+vy_j)}{vy_j}}  \delta_x
       f_{i,j} 
   \label{eq:scheme2}
   + \frac{{h}^{2}}{12} \delta_x^2 f_{i,j} +\frac{{h}^{2}}{12} \delta_y^2 f_{i,j}. 
   \end{align}
  The fourth order compact finite 
  difference scheme (\ref{eq:scheme2}) considered at the mesh point
  $(i,j)$ involves the nearest eight
  neighbouring mesh points. Associated to the shape of the
  computational stencil, we introduce indexes for each node from zero to nine,
\begin{equation}
 \label{eq:coeffnumber}
  \left (
  \begin{array}{ccc}
  \begin{array}{rcl}
  u_{i-1,j+1}=u_6\\
  u_{i-1,j}=u_3\\ 
   u_{i-1,j-1}=u_7\\ 
  \end{array}
  &
  \begin{array}{rcl}
   u_{i, j+1}=u_2 \\
   u_{i, j}=u_0 \\
   u_{i, j-1}=u_{4} \\ 
  \end{array}
  &
  \begin{array}{rcl} 
  u_{i+1,j+1}=u_5\\
  u_{i+1,j}=u_1\\
u_{i+1,j-1}=u_8\\  
  \end{array}
  \end{array}
  \right ).
  \end{equation}
With this indexing, the scheme (\ref{eq:scheme2}) is defined by
  \begin{equation}
  \label{eq:stencil2}
  \sum_{l=0}^8 \alpha_l u_l = \sum_{l=0}^8 \gamma_l f_l,
  \end{equation}
  where the coefficients $\alpha_l$ and $\gamma_l$ are given by
 \begin{align*}
  \alpha_0=&\bigg( {\frac {4 {\kappa }^{2}+{v}^{2}}{12v}}-{\frac {v
  (2 {\rho}^{2}-5 ) }{3{h}^{2}}} \bigg) y_j\\
 &-{\frac {
 \kappa  {v}^{2}+2 {\kappa }^{2}\theta +{v}^{2}r}{3{v}^{2}}}+{\frac 
 {-{v}^{4}+{\kappa }^{2}{\theta }^{2}-{v}^{3}r\rho+{v}^{2}{r}^{2}}{3{v}^{3
 }y_j}},\\
 \alpha_{1,3}=&\bigg( -\frac v{24}+{\frac {\pm \frac 16v\mp\frac 13\kappa  \rho}{h}}+{\frac {v
  ( \rho^2-1 )  }{3{h}^{2}}} \bigg) y_j
 \mp\frac {\kappa  h}{24}+\frac {\kappa } {12}+\frac r6\\
 &\mp{\frac {v r-\kappa  \theta  \rho}{3vh}}\mp{\frac { ( {v}^{2}-\kappa  \theta 
  ) h}{24vy_j}}-{\frac {-2 rv\rho+\kappa  \theta +2 {r}^{2}-{v}
 ^{2}}{12vy_j}} ,\\
 \alpha_{2,4}=&\bigg( -{\frac {{\kappa }^{2}}{6v}}+{\frac {\pm\frac 13\kappa \mp\frac 16\rho
  v}{h}}+{\frac {v ( \rho^2-1 )  }
 {3{h}^{2}}} \bigg) y_j\mp{\frac {{\kappa }^{2}h}{12v}}+{\frac {
 \kappa   ( {v}^{2}+4 \kappa  \theta  ) }{12{v}^{2}}}\\
 &\mp{
 \frac {rv\rho-\kappa  \theta }{3vh}}\mp{\frac {\kappa  
  ( {v}^{2}-\kappa  \theta  ) h}{12{v}^{2}y_j}}+{\frac {
  ( 2 \kappa  \theta +{v}^{2} )  ( {v}^{2}-\kappa  \theta  ) }{12{v}^{3}y_j}},\\
 \alpha_{5,7}=&\bigg( -\frac {\kappa }{24}\pm{\frac { ( 2 \rho+1 )  ( 
 2 \kappa +v ) }{24h}}-{\frac {v ( \rho+1 ) 
  ( 2 \rho+1 ) }{12{h}^{2}}} \bigg) y_j\\&+{\frac {\kappa  
  ( \rho v+2 r+\theta  ) }{24v}}
 \mp{\frac { ( 2 
 \rho+1 )  ( \kappa  \theta +vr ) }{12vh}}+{\frac {
 {v}^{2}r+{v}^{3}\rho-2 r\kappa  \theta }{24{v}^{2}y_j}},\\
 \alpha_{6,8}=& \bigg( \frac {\kappa }{24}\pm{\frac { ( 2 \rho-1 )  ( 
 -2 \kappa +v ) }{24h}}-{\frac {v ( 2 \rho-1 ) 
  ( \rho-1 ) }{12{h}^{2}}} \bigg) y_j\\
&-{\frac {\kappa  
  ( \rho v+2 r+\theta  ) }{24v}}
 \mp{\frac { ( 2 
 \rho-1 )  ( vr-\kappa  \theta  ) }{12vh}}-{\frac {{v}^{2}r+{v}^{3}\rho-2
  r\kappa  \theta }{24{v}^{2}y_j}},
 \end{align*}
 and
   \begin{align*}
   \gamma_0 & = \frac{2}{3},\qquad
   \gamma_5 =\gamma_7=\frac{\rho}{24},&  \gamma_6 &=\gamma_8=-\frac{\rho}{24},\\
   \gamma_{1,3} &= \frac1{12}\mp \frac{h}{24}\pm\frac1{12}{\frac { ( r-\rho v
       ) h}{vy_j}},&
   \gamma_{2,4} &= \frac1{12}\mp\frac1{12}{\frac{\kappa 
       h}{v}}\mp\frac1{12}{\frac { ({v}^{2}  -\kappa \theta  )
       h}{{v}^{2}y_j}}.  
   \end{align*}
   When multiple indexes are used with $\pm$ and $\mp$ signs, the first index corresponds to the upper sign.

\begin{remark}
\label{otherstochmodels}
The derivation of the scheme in this section can be modified to accomodate other stochastic
volatility models as, e.g., the GARCH diffusion model
\eqref{eq:garchmodel} or the 3/2-model \eqref{eq:32model}.
Using these models the structure of the partial differential equations \eqref{P1}, \eqref{P3}
and \eqref{eq:convection} remains the same, only the coefficients of
the derivatives have to be modified 
accordingly. Similarly, the coefficients of the derivatives in
\eqref{eq:dx1}-\eqref{eq:dxxyy} have to be modified. Substituting these in the modified expression
for the truncation error one obtains equivalent $\mathcal{O}(h^4)$
approximations as \eqref{eq:scheme2}.
\end{remark}

\subsection{High-order scheme for the parabolic problem}

  The high-order compact approach presented in the previous section
  can be extended to the parabolic problem directly by
  considering the time derivative in place of $f(x,y)$.
  Any time integrator can be implemented to solve the problem as presented in
  \cite{SpotzCarey}.
  We consider the most common class of methods involving two times steps.
  For example,
  differencing at  $t_{{\mu}}=(1-{\mu})t^n + {\mu} t^{n+1}$, where
  $0 \leq {\mu} \leq 1$ and the superscript $n$ denotes the time
  level, yields a class of integrators that include the forward Euler
  ($\mu = 0$), Crank-Nicolson ($\mu=1/2$) and backward
  Euler ($\mu = 1$) schemes.
  We use the notation $\delta^+_t  u^n = \frac{u^{n+1}-u^{n}}{k}$. Then
  the resulting fully discrete difference scheme for node $(i,j)$ at the time
  level $n$ becomes
  $$
   \sum_{l=0}^8 \mu \alpha_l u_l^{n+1} + (1-\mu) \alpha_l u_l^{n}   =
  \sum_{l=0}^8 \gamma_l \delta^+_t u_l^n,
  $$
  that can be written in the form (after multiplying by $24 v^3h^2yk$)
  \begin{equation}
    \label{eq:hocscheme}
    \sum_{l=0}^8 \beta_l u_l^{n+1}  = \sum_{l=0}^8 \zeta_l u_l^n.
  \end{equation}
The coefficients $\beta_l,$ $\zeta_l$ are numbered according to the
indexes \eqref{eq:coeffnumber} and are given by
\begin{align*} 
 \beta_0 =& (   (   ( 2 {y_j}^{2}-8  ) {v}^{4}+  (   ( -
8 \kappa -8 r  ) y_j-8 \rho r  ) {v}^{3}+  ( 8 {\kappa 
}^{2}{y_j}^{2}+8 {r}^{2}  ) {v}^{2}\\
&-16 {\kappa }^{2}\theta  vy_j +8 {\kappa }^{2}{\theta }^{2}  ) \mu k+16 {v}^{3}y_j  ) {h}^{2}+
  ( -16 {\rho}^{2}+40  ) {y_j}^{2}{v}^{4}\mu k\\
\beta_{1,3} =&\pm (   (\kappa  \theta  {v}^{2} -{v}^{4}-\kappa  y_j{v}^{3}
  ) \mu k-  ( y_j+2 \rho  ) {v}^{3}+2 {v}^{2}r
  ) {h}^{3}+  (   (   ( -{y_j}^{2}+2  ) {v}^{4}\\
&+  (   ( 4 r+2 \kappa   ) y_j+4 \rho r  ) {v}^{3}-
  ( 2 \kappa  \theta +4 {r}^{2}  ) {v}^{2}  ) \mu k+2
 {v}^{3}y_j  ) {h}^{2}\\
&\pm ( 4 {v}^{4}{y_j}^{2}+  ( -8 {y_j}^{
2}\kappa  \rho-8 y_jr  ) {v}^{3}+8 y_j\kappa  \theta  \rho {v}^{2}
  ) \mu kh+  (8 {\rho}^{2}-8  )
 {y_j}^{2}{v}^{4}\mu k,\\
\beta_{2,4} =& \pm (   ( 2 {\kappa }^
{2}\theta  v-2 {\kappa }^{2}{v}^{2}y_j-2 {v}^{3}\kappa   ) \mu k-2 {v}^{2}y_j\kappa +2 v\kappa  \theta -2 {v
}^{3}  ) {h}^{3}+  (   ( 2 {v}^{4}\\
&+2 \kappa  y_j{v}^{3}+  ( -4 {\kappa }^{2}{y_j}^{2}+2 \kappa  \theta   ) {v}^{2}+8 {
\kappa }^{2}\theta  vy_j-4 {\kappa }^{2}{\theta }^{2}  ) \mu k+2 {v
}^{3}y_j  ) {h}^{2}\\
&\pm  (   ( 8 {y_j}^{
2}\kappa +8 y_j\rho r  ) {v}^{3}-4 {v}^{4}{y_j}^{2}\rho-8 {v}^{2}y_j\kappa  \theta 
  ) \mu kh+  ( 8 {\rho}^{2}-8  ) {y_j}^{2}{v}^{4}\mu k,\\
 \beta_{5,7} =& (   ( {v}^{4}\rho+  ( -{y}^{2}\kappa +\kappa  y_j\rho+r
  ) {v}^{3}+  ( \theta +2 r  ) \kappa  y_j{v}^{2}-2 r
\kappa  \theta  v  ) \mu k\\
&+{v}^{3}\rho y_j  ) {h}^{2}\pm  (   ( 2 \rho+1  ) {y_j}^{2}{v}^{4}+  (   ( 2+4 
\rho  ) \kappa  {y_j}^{2}+  ( -4 \rho r-2 r  ) y_j
  ) {v}^{3}\\
&+  ( -2 \theta -4 \theta  \rho  ) \kappa  y_j{
v}^{2}  ) \mu kh+  ( -2-4 {\rho}^{2}-6 \rho  ) {y_j}^{2
}{v}^{4}\mu k,\\
\beta_{6,8} =&  (   ( -{v}^{4}\rho+  ( {y_j}^{2}\kappa -\kappa  y_j\rho-r
  ) {v}^{3}+  ( -\theta -2 r  ) \kappa  y_j{v}^{2}+2 r
\kappa  \theta  v  ) \mu k\\
&-{v}^{3}\rho y_j  ) {h}^{2}
\pm  (   ( 2 \rho-1  ) {y_j}^{2}{v}^{4}+  (   ( 2-4 
\rho  ) \kappa  {y_j}^{2}+( 2 r-4 \rho r  ) y_j
  ) {v}^{3}\\
&+  ( 4 \theta  \rho-2 \theta   ) \kappa  y_j{v
}^{2}  ) \mu kh+  ( -4 {\rho}^{2}+6 \rho-2  ) {y_j}^{2}
{v}^{4}\mu k,
 \end{align*}
 and
\begin{align*} 
 \zeta_0 =& 16v^3y_jh^2+(1-\mu)k( (   ( 8-2 {y_j}^{2}  ) {v}^{4}+  (   ( 8 \kappa 
 +8 r  ) y_j+8 \rho r  ) {v}^{3}\\
 &+ ( -8 {r}^{2}-8 {
 \kappa }^{2}{y_j}^{2}  ) {v}^{2}+16 {\kappa }^{2}\theta  vy_j-8 {
 \kappa }^{2}{\theta }^{2}  ) {h}^{2}+  ( -40+16 {\rho}^{2}
   ) {y_j}^{2}{v}^{4}
 ),\\
 \zeta_{1,3} =&\pm(2r-(y_j+2\rho)v)v^2h^3+2v^3y_jh^2+(1-\mu)k ( \pm (
 {v}\kappa  y_j+  {v}^{2} -\kappa  \theta      ){v}^{2}{h}^{3}\\
 &+  ( {v}^{2}{y_j}^{2}-  ( 4 r+2 \kappa 
   ) vy_j+   4 {r}^{2}+2 \kappa  \theta -2 {v}^{2}-4 
 \rho vr   ) {v}^{2}  {h}^{2}\\
 &\pm  (   ( -4 {v}+8
  \kappa  \rho  ) {v}^{3}{y_j}^{2}+  ( -8 \kappa  \theta  
 \rho+8 vr  ) {v}^{2}y_j  ) h+  ( 8 {v}^{2}-8 {v}^{2}{
 \rho}^{2}  ) {v}^{2}{y_j}^{2}),
  \\
 \zeta_{2,4} =&\pm(2v\kappa \theta -2v^2y_j\kappa -2v^3)h^3+2v^3y_jh^2+(1-\mu)k ( \pm 2(
 {v}^{3}\kappa - {\kappa }^{2}\theta  v\\
 &+ {\kappa }^{2}{v}^{
 2}y_j  ) {h}^{3}+  ( 4 {\kappa }^{2}{v}^{2}{y_j}^{2}- ( 2 {v}^{2}+8 {\kappa }\theta   )\kappa  v y_j
 +2 \kappa  \theta (2 {\kappa }{\theta }- {v}^{2})-2 {v}^{4}  ) {h}^{2}\\
 &\pm  (   ( -8 {v}^{3}\kappa +4 {v}^{4}\rho  ) {y_j}^{2}+
   ( 8 \kappa  \theta  {v}^{2}-8 {v}^{3}\rho r  ) y_j
   ) h+  ( -8 {v}^{4}{\rho}^{2}+8 {v}^{4}  ) {y_j}^{2}),\\
 \zeta_{5,7} =&v^3\rho y_jh^2+(1-\mu)k (  ( {v}^{3}{y_j}^{2}\kappa -v  ( v\kappa  \theta +2 r\kappa  v+
 \kappa  {v}^{2}\rho  ) y_j\\
 &-v  ( {v}^{2}r-2 r\kappa  \theta +{v}
 ^{3}\rho  )   ) {h}^{2}\pm  ( -v  ( 2 {v}^{3}\rho+{v}
 ^{3}+4 \kappa  {v}^{2}\rho+2 {v}^{2}\kappa   ) {y_j}^{2}\\
 &+v  ( 
 2 v\kappa  \theta +4 v\kappa  \theta  \rho+4 {v}^{2}\rho r+2 {v}^
 {2}r  ) y_j  ) h+v  ( 2 {v}^{3}+6 {v}^{3}\rho+4 {v}^{3
 }{\rho}^{2}  ) {y_j}^{2}), \\
 \zeta_{6,8} =&-v^3\rho y_jh^2 +(1-\mu)k (  ( -{v}^{3}{y_j}^{2}\kappa +v  ( v\kappa  \theta +2 r\kappa  v+
 \kappa  {v}^{2}\rho  ) y_j\\
 &+v  ( {v}^{2}r-2 r\kappa  \theta +{v}
 ^{3}\rho  )   ) {h}^{2}
 \pm  ( v  ( -2 {v}^{3}\rho+{v}
 ^{3}+4 \kappa  {v}^{2}\rho-2 {v}^{2}\kappa   ) {y_j}^{2}\\
 &+v  ( 
 2 v\kappa  \theta -4 v\kappa  \theta  \rho+4 {v}^{2}\rho r-2 {v}^{
 2}r  ) y_j  ) h+v  ( 2 {v}^{3}-6 {v}^{3}\rho+4 {v}^{3}
 {\rho}^{2}  ) {y_j}^{2}).
\end{align*}
When multiple indexes are used with $\pm$ and $\mp$ signs, the first index
corresponds to the upper sign.
Choosing $\mu=1/2,$ i.e., in the Crank-Nicolson case, the
resulting scheme is of order two in time and of order four in space.

\subsection{Stability analysis}
\label{numanalsection}

Besides the multi-dimensionality the initial-boundary-value problem
\eqref{eq:hocscheme} features two main difficulties for its stability analysis:
the coefficients are non-constant and the boundary conditions are not periodic.
In this section, we consider the von Neumann stability analysis (see,
e.g., \cite{StrikwerdaBook})
even if the problem considered does not satisfy periodic
boundary conditions. This approach is extensively used in the literature and
yields good criteria on the robustness of the scheme. 
Other approaches which take into account the boundary conditions 
like normal mode analysis \cite{GKS} are beyond the scope of the 
present paper (we refer to \cite{FournieRigal} for normal mode
analysis for a high-order compact scheme).

To consider the variable coefficients, the principle of `frozen coefficients'
(the variable coefficient problem is stable if all the `frozen' problems are stable) \cite{GKS,StrikwerdaBook}
is employed. It should be noted, that in the discrete case, this principle is far from trivial. 
The most general statements are given in \cite{GKS, Magnus, Wade, StrikewedaWade} and reference therein for hyperbolic problems.
For parabolic problems in the discrete case we refer to \cite{RicMor67, Widlund65}.
Using the frozen coefficients approach gives a necessary stability
condition and slightly strengthened stability for frozen coefficients
is sufficient to ensure overall stability \cite{RicMor67}. 

We now turn to the von Neumann stability analysis.
 We rewrite $u^n_{i,j}$ as
\begin{equation}
  \label{eq:Unwave}
  u^n_{i,j}=g^n e^{Iiz_1 + Ijz_2},
\end{equation}
where $I$ is the imaginary unit,
$g^n$ is the amplitude
at time level $n$, and $z_1={2\pi h}/{\lambda_1}$ and $z_2={2\pi h}/{\lambda_2}$ are
phase angles with wavelengths $\lambda_1$ and  $\lambda_2,$ in the
range $[0,2\pi[$, respectively. Then the scheme is stable if for all $z_1$ and $z_2$ the amplification
factor $G={g^{n+1}}/{g^{n}}$ satisfies the relation 
\begin{equation}
  \label{eq:vNstab}
  |G|^2 - 1 \leq 0.
\end{equation}
An expression for $G$ can be found using (\ref{eq:Unwave}) in
(\ref{eq:hocscheme}).

Our aim is to prove von Neumann stability (for `frozen coefficients')
without restrictions on the time step size. 
To show that \eqref{eq:vNstab} holds we would need to study the
(formidable) expression for the amplification factor $G$ (not given
here) which consists of polynomials of order up to six in 13 variables.
To reduce the high number of parameters in the following numerical
analysis, we assume here zero
interest rate $r=0$ and choose the parameter
${\mu}=1/2$ (Crank-Nicolson case). 
Even then, at present a complete analysis for non-zero correlation seems out of
reach, but we are able to show the following result.

\begin{theorem}
\label{thm:stability}
For $r=\rho=0$ and $\mu=1/2$ (Crank-Nicolson), the 
scheme (\ref{eq:hocscheme})
satisfies the stability condition \eqref{eq:vNstab}.
\end{theorem}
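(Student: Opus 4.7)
The natural strategy is von Neumann's recipe. Substituting the ansatz~\eqref{eq:Unwave} into~\eqref{eq:hocscheme} gives $G = Z(z_1,z_2)/B(z_1,z_2)$ with $B(z_1,z_2) = \sum_l \beta_l e^{I(\Delta i_l z_1 + \Delta j_l z_2)}$ and $Z(z_1,z_2)$ defined analogously from $\zeta_l$. Inspection of the coefficient tables shows that for $\mu=1/2$ the scheme is already in the canonical Crank--Nicolson form
\begin{equation*}
\beta_l = T_l + \tfrac{k}{2}X_l,\qquad \zeta_l = T_l - \tfrac{k}{2}X_l,
\end{equation*}
with the ``mass'' weights $T_l = 24v^3h^2y_j\gamma_l$ and the ``stiffness'' weights $X_l = 24v^3h^2y_j\alpha_l$ both independent of~$k$. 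Writing $T,X$ for the corresponding Fourier symbols, we obtain $B=T+(k/2)X$, $Z=T-(k/2)X$, and the identity $|a-b|^2-|a+b|^2=-4\,\mathrm{Re}(a\bar b)$ reduces~\eqref{eq:vNstab} to the single pointwise estimate
\begin{equation*}
\mathrm{Re}\bigl(T(z_1,z_2)\,\overline{X(z_1,z_2)}\bigr) \;\geq\; 0\qquad \text{for all } (z_1,z_2)\in[0,2\pi)^2.
\end{equation*}
The time step $k$ has disappeared entirely from this inequality, which is precisely the unconditional-in-$k$ stability advertised.

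The hypotheses $r=\rho=0$ then bite on the symbols. Setting $\rho=0$ annihilates the four corner weights $\gamma_5=\gamma_6=\gamma_7=\gamma_8=0$, so $T$ lives on the five-point cross and decomposes cleanly into even and odd parts: $T=T_R+IT_I$ with
\begin{equation*}
T_R = T_0 + (T_1+T_3)\cos z_1 + (T_2+T_4)\cos z_2, \quad T_I = (T_1-T_3)\sin z_1 + (T_2-T_4)\sin z_2,
\end{equation*}
and $X=X_R+IX_I$ admits an analogous but nine-point expansion in which the symmetric (diffusive/reactive) part of the $\alpha_l$ feeds $X_R$ and the antisymmetric (convective) part feeds $X_I$. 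The condition becomes $T_R X_R + T_I X_I \ge 0$. After the substitutions $\cos z_k = 1-2s_k$ with $s_k = \sin^2(z_k/2)\in[0,1]$ and $\sin z_k = 2\sin(z_k/2)\cos(z_k/2)$, the left-hand side becomes a polynomial in $h$, the positive frozen parameters $v,\kappa,\theta,y_j$, and $s_1,s_2\in[0,1]$. One checks easily that $T_R = 4v^3y_j h^2(4+\cos z_1+\cos z_2)>0$ is uniformly positive; the diffusive contribution $T_R X_R$ should therefore carry a dominant positive term of the form $C(v,\kappa,\theta,y_j,h)(s_1+s_2)+\cdots$, while the mixed $T_I X_I$ piece carries common factors of $\sin z_k$, vanishes at $s_k=0$, and is expected to be either sign-definite or absorbable into $T_R X_R$ by a Cauchy--Schwarz estimate.

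The main obstacle is precisely this final certification. Even after the considerable pruning afforded by $r=\rho=0$, the resulting polynomial is bulky --- the authors themselves warn that the general amplification factor involves polynomials of degree up to six in thirteen variables --- so I would expect to rely on computer-algebra manipulation to collect, factor and group terms into a sum of squares plus manifestly non-negative monomials. Expressions such as $vy_j-\theta$ or $v^2-\kappa\theta$, which can change sign over the admissible parameter range, must end up entering only in squared combinations, so that the positivity of $\mathrm{Re}(T\overline{X})$ is uniform in the frozen values $v,\kappa,\theta,y_j>0$; finding the right grouping is the technical heart of the argument.
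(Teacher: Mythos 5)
Your reduction is sound and is, in essence, the same route the paper takes: substitute the Fourier ansatz, observe that for $\mu=1/2$ the symbols split as $B=T+\tfrac{k}{2}X$, $Z=T-\tfrac{k}{2}X$ with $T,X$ independent of $k$, so that $|G|^2-1\le 0$ collapses to the $k$-free pointwise inequality $\mathrm{Re}\bigl(T\overline{X}\bigr)\ge 0$ (this is exactly why the numerator of the paper's expression \eqref{eq:vNstab2} carries a clean factor of $k$ while the denominator is the manifestly nonnegative $|B|^2$). Your identification of $T_R=4v^3y_jh^2(4+\cos z_1+\cos z_2)>0$ and the vanishing of the corner weights for $\rho=0$ are also correct.

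The problem is that you stop exactly where the theorem's content begins. The statement ``the diffusive contribution should therefore carry a dominant positive term \dots while the mixed piece is expected to be either sign-definite or absorbable by a Cauchy--Schwarz estimate'' is a conjecture, not a proof: nothing in your argument rules out that $T_RX_R+T_IX_I$ changes sign for some admissible $v,\kappa,\theta,y_j,h$ and some $(z_1,z_2)$. Note in particular that coefficients such as $v^2-\kappa\theta$ and $vy_j-\theta$ genuinely change sign over the parameter range, so the positivity is not obvious termwise. The paper closes this gap constructively: after the substitution $V=2vy\,s_1/\kappa$, $W=2(\theta-vy)s_2/v$ (which packages precisely those sign-indefinite combinations into single variables), the numerator factors as $-8kh^2(n_4h^2+n_2)$ with $n_2=-4V^3\kappa^3f_2f_1s_1$ and $n_4=-4V\kappa^3f_3s_1^3W^2-V^3\kappa^3f_4s_1^3$, where $f_1\le0$, $f_2\ge0$, $f_3\le0$, $f_4\le0$ are explicit sign-definite trigonometric polynomials in $c_1,c_2$ --- so $W$ enters only through $W^2$ and every factor has known sign. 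This explicit certificate, not a Cauchy--Schwarz absorption, is the heart of the proof; the paper additionally verifies positivity of the denominator coefficients (the delicate one being $d_2$, handled by minimising a quadratic in $W$ and checking the sign of its minimum via $f_7$), which you would still need in order to know $B\ne 0$ and $G$ is well defined. Without producing such a sum-of-known-signs decomposition, your proposal is a correct setup but not a proof.
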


\begin{proof}
Let us define new variables
\begin{align*}
c_1=\cos\left(\frac{z_1}{2}\right),&\quad c_2=\cos\left(\frac{z_2}{2}\right),\quad
s_1=\sin\left(\frac{z_1}{2}\right),\quad s_2=\sin\left(\frac{z_2}{2}\right),\\ W&=\frac {2 \left( \theta-vy \right) }{v}s_2,\quad 
V= \frac {2vy}{\kappa}s_1,
\end{align*}
which allow us to express $G$ in terms of $h,k,\kappa,V,W$ and
trigonometric functions only. This reduces the number of variables in
the amplification factor from ten to nine.
The new variable $V$ has constant positive sign contrary to $W$.

In the new variables the stability criterion
\eqref{eq:vNstab} of the 
scheme can be written as 
\begin{equation}
\label{eq:vNstab2}
 \frac{-8kh^2(n_4h^2+n_2)}{d_6h^6 + d_4h^4 + d_2h^2 + d_0} \leq 0, 
\end{equation}
with
\begin{align*}
n_4 & =-4\,V{\kappa}^{3}{\it f_3}\,s_1^{3}{W}^{2}-{V}^{3}{\kappa}^{3}{
\it f_4}\,s_1^{3},\quad
n_2 = -4\,{V}^{3}{\kappa}^{3}{\it f_2}\,{\it f_1}\,{\it s_1},\\
d_6 & = 4\, \left( -2\,W{\it c_2}+V{\it c_1} \right) ^{2}{\kappa}^{2}s_1^{4},\\
d_4 &= \frac{1}{4}\,{\kappa}^{4}s_1^{4} \left( {V}^{2}-4\,V{\it c_1}\,W{\it 
c_2}+4\,{W}^{2} \right) ^{2}{k}^{2}\\
&\quad -4\,V{\kappa}^{3}s_1^{3} \left( {\it f_4}\,{V}^{2}+4\,{\it f_3}
\,{W}^{2} \right) k +16\,{\kappa}^{2}{V}^{2}f_2^{2}s_1^{2},\\
d_2 &= {V}^{2}{\kappa}^{4}s_1^{2} \left( {V}^{2}{\it f_6}-36\,V{\it 
c_1}\,W{\it c_2}+4\,{\it f_5}\,{W}^{2} \right) {k}^{2}-16\,{V}^{3}{
\kappa}^{3}{\it f_2}\,{\it f_1}\,{\it s_1}\,k,\\
d_0 & =4\,{V}^{4}{\kappa}^{4}f_1^{2}{k}^{2},
\end{align*}

where $f_1,$ $f_2,$ $f_3,$ $f_4,$ $f_5,$ and $f_6$  have constant sign and are defined by
\begin{align*}
f_1 &= 2c_1^{2}c_2^{2}+c_1^{2}+c_2^{2}-4 \leq 0,&
f_2 &=c_1^{2}+c_2^{2}+1\geq 0, \\
f_3 &=2c_1^{2}c_2^{2}-c_1^{2}-1 \leq 0,&
f_4 &=2c_1^{2}c_2^{2}-c_2^{2}-1 \leq 0,\\
f_5 &=4c_1^{4}c_2^{2}-2c_1^{2}-c_2^{2}+8 \geq 0,&
f_6 &= 4c_1^{2}{{c_2}}^{4}-2c_2^{2}-c_1^{2}+8\geq 0.
\end{align*}
We observe that we can restrict our analysis (expect for $d_2$, treated  below)  to the trigonometric functions $s_1$, $s_2$, $c_1,$ and $c_2$ in the reduced range $[0,1]$ (${z_1}/{2}$ and ${z_2}/{2}$ are in $[0,\pi[$,  even exponents for cosinus functions). 
It is straight-forward to verify that
$n_4,$ $n_2,$ $d_6,$ $d_4,$ and $d_0$ are positive. It remains to prove
$d_2=d_{22}k^2 + d_{21}k$ is positive as well. Indeed,
$d_{21}\geq 0$ and $d_{22}$ is a polynomial of degree two in $W$ having a
positive leading order coefficient. The minimum value of $d_{22}$ is
given by
$$
m=2{V}^{4}{\kappa}^{4}s_1^{2}f_1f_7/f_5
$$
with
$f_7 =4c_2^{4}c_1^{4}-2c_1^{4}c_2^{2}-2c_1^{2}c_2^{4}+6c_1^{2}c_2^{2}+c_1^{2}+c_2^{2}-8\leq0.$
Hence, $m$ is positive and then $d_2$ is positive as well.
Therefore, the numerator in \eqref{eq:vNstab2} is negative and the
denominator in \eqref{eq:vNstab2} is positive
which completes the proof.
\end{proof}\\

For non-zero correlation the situation becomes more
involved. Additional terms appear in the expression for the
amplification factor $G$ and we face an additional degree of freedom
through $\rho$. 
Since we have proven condition \eqref{eq:vNstab} for $\rho=0$ it seems
reasonable to assume it also holds at least for
values of 
$\rho$ close to zero. In practical applications, however, correlation can be
strongly negative. Few theoretical results can be obtained, we recall 
the following lemma from \cite{DuFo12}.
\begin{lemma}
\label{partStabLemma}
For  any $\rho$, $r=0$, and $\mu=1/2$ (Crank-Nicolson) it holds: if either 
$c_1=\pm1$ or $c_2=\pm 1$ or $y=0$, then the stability condition \eqref{eq:vNstab} 
is satisfied.
\end{lemma}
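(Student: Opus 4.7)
The plan is to verify \eqref{eq:vNstab} by substituting each degenerate condition into the amplification factor $G$ of scheme \eqref{eq:hocscheme} with $r = 0$, $\mu = 1/2$ and general $\rho$, and checking the sign of $|G|^2 - 1$ directly. The strategy is to exploit the severe structural simplifications that each of the three degeneracies produces: in contrast to the full $\rho \neq 0$ analysis (where extra terms make a closed-form argument elusive), on these exceptional loci the troublesome terms drop out identically and the analysis reduces to a variant of the proof of Theorem~\ref{thm:stability}.

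Concretely, I would first translate each hypothesis into the $(V,W)$ variables used in that proof. We have $c_1 = \pm 1 \Rightarrow s_1 = 0 \Rightarrow V = 0$; the condition $y = 0$ gives $V = 0$ directly; and $c_2 = \pm 1 \Rightarrow s_2 = 0 \Rightarrow W = 0$. Crucially, in each of the three cases the Fourier symbol of the discrete mixed derivative, which is proportional to $s_1 c_1 s_2 c_2$, vanishes, and with it every $\rho$-dependent contribution to $G$. Thus on each degenerate locus the amplification factor coincides, up to sign, with the one analysed at $\rho = 0$ in Theorem~\ref{thm:stability}, and the same algebraic factorisation $|G|^2 - 1 = -8kh^2(n_4 h^2 + n_2)/(d_6 h^6 + d_4 h^4 + d_2 h^2 + d_0)$ is available.

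For the two cases $V = 0$ (namely $c_1 = \pm 1$ and $y = 0$), inspection of $n_4$ and $n_2$ from the proof of Theorem~\ref{thm:stability} shows that each is a polynomial in $V$ with no constant term, so the numerator of $|G|^2 - 1$ vanishes identically and the inequality holds with equality $|G| = 1$. For the remaining case $W = 0$, the $W^2$ contribution in $n_4$ drops and we are left with $n_4 h^2 + n_2 = -V^3 \kappa^3 \bigl(f_4 s_1^3 h^2 + 4 f_1 f_2 s_1\bigr)$, which is non-negative by the sign properties $f_1, f_4 \leq 0$ and $f_2 \geq 0$ already established; the denominator being positive on this locus is inherited verbatim from the earlier argument. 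The main potential obstacle, namely controlling the $\rho$-dependent cross terms in $G$, is thus removed for free by the vanishing of the mixed-derivative symbol on each of the three loci, so the proof never has to confront the full non-zero-$\rho$ amplification factor; this is precisely what makes the three degeneracies tractable while the generic $\rho \neq 0$ case remains open.
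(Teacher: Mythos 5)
The paper itself offers no argument here---its ``proof'' is a citation to Lemma~1 of \cite{DuFo12}---so your attempt has to stand on its own, and it does not. The load-bearing claim, that on each of the three loci ``every $\rho$-dependent contribution to $G$'' vanishes because the symbol of the discrete mixed derivative vanishes, is only true for the case $c_1=\pm1$. With $r=0$ the $\rho$-dependent terms of the scheme are attached not only to $\delta_x\delta_y$, $\delta_x^2\delta_y$, $\delta_x\delta_y^2$ and $\delta_x^2\delta_y^2$, but also to the pure first difference $\delta_x$ on the right-hand side: the correction $-\tfrac{h^2}{12}\tfrac{2\rho v-2r+vy}{vy}\,\delta_x f$ in \eqref{eq:scheme2} produces the $\pm\tfrac{1}{12}\tfrac{(r-\rho v)h}{vy_j}$ part of $\gamma_{1,3}$, i.e.\ the $\mp 2\rho v^3h^3$ terms visible in $\beta_{1,3}$ and $\zeta_{1,3}$. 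Its Fourier symbol is proportional to $\rho\sin z_1$ and survives both when $s_2=0$ (case $c_2=\pm1$) and when $y=0$ (after the $24v^3h^2y_jk$ normalisation); at $y=0$ the $\pm v^4\rho\mu k h^2$ parts of $\beta_{5,7}$, $\beta_{6,8}$ survive as well. Hence for two of the three cases the amplification factor does \emph{not} reduce to the $\rho=0$ expression of Theorem~\ref{thm:stability}, and your sign check of $n_2$, $n_4$ at $W=0$ (or $V=0$) is applied to the wrong quantity. A correct proof must confront these residual $\rho$-terms, e.g.\ by showing that their contribution to $|G|^2-1$ (which enters through the imaginary parts of the symbols of the two sides of \eqref{eq:hocscheme} and is proportional to $\rho\sin^2 z_1$) is absorbed by the $\rho=0$ part; that is precisely the non-trivial content of Lemma~1 in \cite{DuFo12}.

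A secondary problem: on the loci where $V=0$ every $d_i$ in the denominator of \eqref{eq:vNstab2} vanishes as well (each carries a factor $s_1$ or $V\propto s_1$), so ``the numerator vanishes identically, hence $|G|=1$'' is a $0/0$ non-sequitur. The limit of the ratio as $s_1\to0$ is finite and non-positive but generically non-zero---consistent with the paper's remark that $|G|^2-1=0$ holds for $y=0$ but is not claimed for $c_1=\pm1$. Your case $c_1=\pm1$ can be repaired: there the $\rho$-dependence genuinely disappears (every $\rho$-bearing term carries at least one $x$-difference factor whose symbol vanishes with $s_1$), and one is left with a one-dimensional Crank--Nicolson symbol in $z_2$ whose stability follows as in Theorem~\ref{thm:stability}. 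The cases $c_2=\pm1$ and $y=0$, however, require a separate argument that your proposal does not supply.
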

\begin{proof}
  See Lemma~1 in \cite{DuFo12}.
\end{proof}

In \cite{DuFo12}, we have reformulated condition  \eqref{eq:vNstab} into a constrained
optimisation problem and have employed a line-search global-optimisation
algorithm to find the maxima.
We have found that
the stability condition \eqref{eq:vNstab} was always satisfied. The maxima for each $\rho\in [-1,0]$ 
were always negative but very close to zero.
This result is in agreement with Lemma~\ref{partStabLemma} (in fact, $|G|^2-1=0$ for $y=0$).
Our conjecture from these results is that the stability condition
\eqref{eq:vNstab} is satisfied also for non-vanishing correlation
although it will be hard to give an analytical proof.

In our numerical experiments we
observe stability also for a general choice of parameters. To validate
the stability property of the scheme also 
for general parameters, we
perform additional numerical tests in section~\ref{numsection}.
 

 \section{Numerical results}
 \label{numsection}

\subsection{Numerical convergence}

 \begin{table}[!t]
 \begin{center}
 \begin{tabular}[c]{l c}
 \toprule
 Parameter & Value\\
 \midrule
  strike price & $K=100$\\
  time to maturity & $T=0.5$\\
  interest rate & $r=0.05$\\
  volatility of volatility & $v=0.1$\\ 
  mean reversion speed & $\kappa =2$\\
 long-run mean of $\sigma$ & $\theta =0.1$\\
 correlation & $\rho=-0.5$\\
 \bottomrule
 \end{tabular}
 \caption{Default parameters for numerical simulations.}
 \label{defaulttable}
 \end{center}
 \end{table}

In this section we perform a numerical study to compute the order of
convergence of the scheme \eqref{eq:hocscheme}.
Due to the compact discretization the
resulting linear systems have a good sparsity pattern and can be solved
very efficiently.
We compute the $l_2$ norm error $\varepsilon_2$ and the maximum norm error $\varepsilon_\infty$
of the numerical solution with respect to a numerical reference solution on a
fine grid. We fix the parabolic mesh ratio $k/h^2$ to
a constant value which is natural for parabolic PDEs and our
scheme which is of order $\mathcal{O}(k^2)$ in time and
$\mathcal{O}(h^4)$ in space. 
Then, asymptotically, we expect these
errors to converge as $\varepsilon = Ch^m$
for some $m$ and $C$ representing constants. This implies
$\ln(\varepsilon) = \ln(C) + m \ln(h) .$
Hence, the double-logarithmic plot $\varepsilon$ against $h$ should be
asymptotic to a straight line with slope $m$. This gives a method for
experimentally determining the order of the scheme. 

Figure \ref{fig:sol} shows the numerical
solution for the European option price at time $T=0.5$ using the
parameters from Table~\ref{defaulttable}.
\begin{figure}[!ht]
  \centering
  \includegraphics[width=0.65\textwidth]{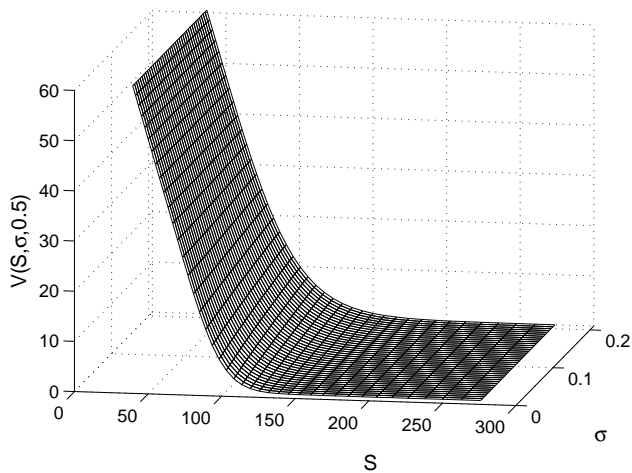}%
  \caption{Numerical solution for the European option
    price.}%
  \label{fig:sol}
\end{figure}

We refer to
Figure~\ref{fig:numconv1} and Figure~\ref{fig:numconv2} for the
results of the numerical convergence study using the default parameters from
Table~\ref{defaulttable}. For the parameter $\mu$, we use a Rannacher
time-stepping choice \cite{Ran84}, i.e., we start with four fully
implicit quarter time steps 
($\mu=1$) and then continue with Crank-Nicolson ($\mu=1/2$).
\begin{figure}[htb]
  \centering
  \includegraphics[width=0.65\textwidth]{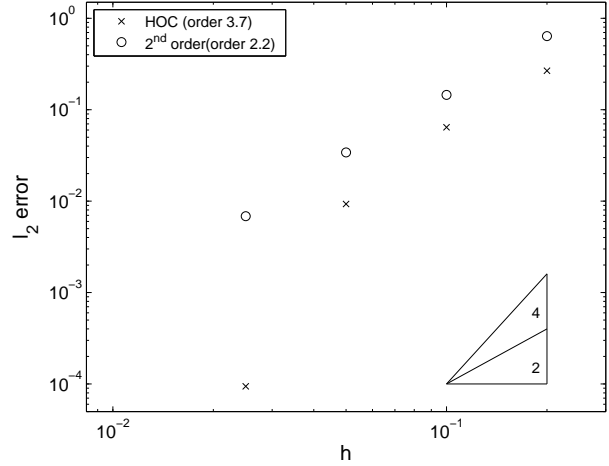}%
  \caption{$l_2$-error vs. $h$.}%
  \label{fig:numconv1}
\end{figure}
\begin{figure}[htb]
  \centering
  \includegraphics[width=0.65\textwidth]{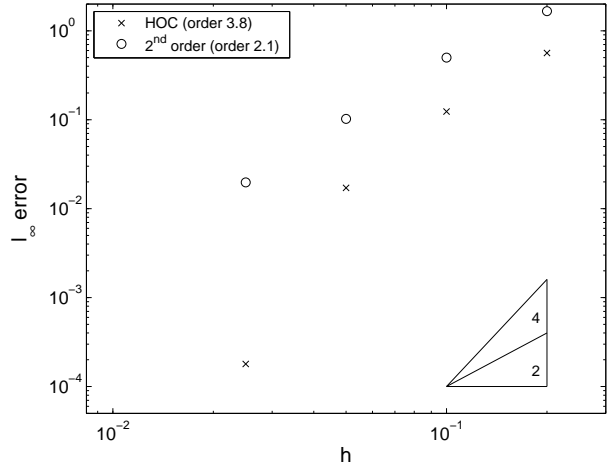}%
  \caption{$l_\infty$-error vs. $h$.}%
  \label{fig:numconv2}
\end{figure}
For comparison we conducted additional
experiments using a standard, second order scheme (based on the central
difference discretization \eqref{eq:central} where we neglect the
truncation error).
We observe that the numerical convergence order agrees well with the
theoretical order of the schemes.
It is important to choose the mesh in such a way that the singular
point of the initial condition is not a point of the mesh.
The construction of such a mesh is always possible in a simple manner. 
Then the non-smooth payoff can be directly considered in our
scheme and we observe fourth order numerical convergence. 

\begin{remark}
Without constraint on the mesh, i.e.\ when then singular point of
the payoff is a mesh point, the rate of convergence is reduced to
two. However, it is possible to recover the fourth order convergence with
such a mesh if the initial data are smoothed.
\end{remark}

The numerical convergence analysis also shows the superior efficiency of the
high-order scheme compared to a standard second order
discretization. 
In each time step of each scheme a linear system has to be solved.
For both schemes this requires the same computational time for the same dimension.
To achieve the same level of accuracy the new scheme
requires significantly less grid points, or in other words, the computational time
to obtain a given accuracy level is greatly reduced by using the
high-order scheme.  

\subsection{Numerical stability analysis}

In our numerical analysis in section~\ref{numanalsection}, we have proven
the stability result Theorem~\ref{thm:stability} for $r=\rho=0.$
To validate this property for general
parameters, we perform additional numerical tests. We compute
numerical solutions for varying values
of the parabolic mesh ratio $k/h^2$ and the mesh width $h.$ Plotting
the associated $l_2$ norm errors in the plane should allow us to detect
stability restrictions depending on $k/h^2$ or oscillations that occur
for high cell Reynolds number (large $h$). This approach for a numerical
stability study was also used in \cite{DuFoJu03}.
\begin{figure}[htb]
  \centering
  \includegraphics[width=0.65\textwidth]{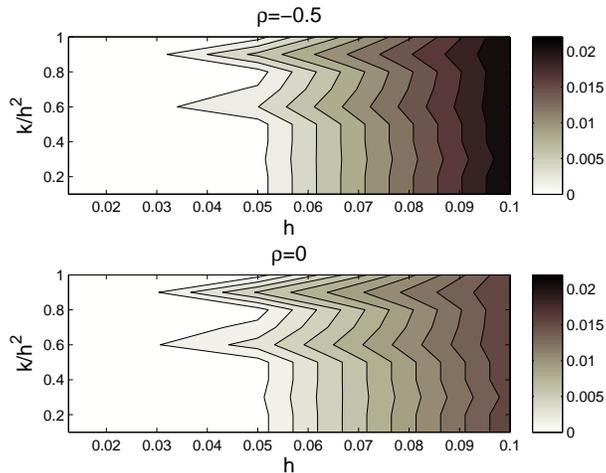}%
  \caption{$l_2$ norm error in the $k/h^2$-$h$-plane for $\rho=-0.5$ (top) and
    $\rho=0$ (bottom).}%
  \label{fig:numstability}
\end{figure}
We perform numerical experiments for $\rho=0$ and $\rho=-0.5$.
For the other parameters, we use again the default parameters from
Table~\ref{defaulttable}. The results are shown in
Figure~\ref{fig:numstability}. For both cases, $\rho=0$ and $\rho=-0.5,$ the
errors show a similar behaviour, being slightly larger for
non-vanishing correlation. There is almost no dependence of the error
on the parabolic mesh ratio $k/h^2,$ which confirms numerically
regular solutions can be obtained without restriction on the time 
step size. For larger values of $h,$ which also result
in a higher cell Reynolds number, the error grows gradually, and no
oscillation in the numerical solutions occurs.
Based on these results and the findings in \cite{DuFo12}, we
conjecture that
the stability condition \eqref{eq:vNstab} also holds for general choice of
parameters.

\section{Conclusion}
\label{concsection}

We have presented a new high-order compact finite difference scheme
for option pricing under stochastic volatility that is fourth order
accurate in space and second order accurate in time. We have conducted a
von Neumann stability analysis (for `frozen coefficients' and periodic
boundary data) and proved
unconditional stability for vanishing correlation. 
In our numerical experiments we
observe a stable behaviour also for a general choice of parameters.
Additional numerical tests presented here and the results of subsequent research
reported in \cite{DuFo12} suggest
that the scheme is also von Neumann stable for non-zero correlation. 
In our numerical convergence study we obtain fourth order numerical
convergence for the non-smooth payoffs which are typical in option pricing.

It would be interesting to consider extensions of this scheme to
non-uniform grids and to the American option pricing problem, where early exercise of the option is
possible. 
An approach to the first would be to introduce a transformation of the
partial differential equation from a non-uniform grid to a uniform
grid \cite{Fournie00}. Then our high order compact methodology can be applied to this
transformed partial differential equation. This is, however, not
straight-forward as the derivatives of the transformation appear in
the truncation error and due to the presence of the cross-derivative
terms. One cannot proceed to cancel terms in the truncation error in a
similar fashion as in the current paper, and the derivation of a
high-order compact scheme becomes much more involved. 
For the second extension, the American option pricing problem, one has to solve a free
boundary problem. It can be
written as a linear complementarity problem which can be
discretised using the scheme \eqref{eq:hocscheme}.
To retain
the high-order convergence one would need to combine the high-order
discretization with a high-order resolution of the free boundary.
Both extensions are beyond the scope of the present paper, and we
leave them for future research.

\appendix

 \bigskip\noindent{{\bf Acknowledgement.}\newline
Bertram D{\"u}ring acknowledges partial support from the
 Austrian Science Fund (FWF), grant P20214, and
from the Austrian-Croatian Project HR 01/2010 
of the Austrian Exchange Service (\"OAD).
The authors are grateful to the anonymous referees for helpful remarks and suggestions.}


\end{document}